\def\htab{\hspace*{3em}}
\def\YES{\texttt{YES}}
\def\poly{\textrm{poly}}
\def\mathdef{\stackrel{\mathrm{def}}{=}}
\newcommand{\nop}[1]{}
\def\E{\textrm{E}}
\def\htab{\hspace*{3em}}
\def\exprun{\widetilde{O}}
\def\allbutfalse{\{0,1\}^3\setminus 0^3}
\newtheorem{fact}{Fact}
\begin{document}

\title{
%A Derandomization of HSSW
% tamaki-20110202
% possible titles
%
Derandomizing HSSW Algorithm for 3-SAT
%
%- Covering Codes for Punctuated Hamming Cubes
%  and Its Application to the Derandomization of 3-SAT Algorithms
%
}

\author{
Kazuhisa Makino \inst{1}
\and
Suguru Tamaki\inst{2}
\and
Masaki Yamamoto\inst{3}
}

\institute{
Graduate School of Information Science and Technology,
University of Tokyo\\
\email{makino@mist.i.u-tokyo.ac.jp}
\and
Graduate School of Informatics,
Kyoto University\\
\email{tamak@kuis.kyoto-u.ac.jp}
\and
Dept.\ of Informatics,
Kwansei-Gakuin University\\
\email{masaki.yamamoto@kwansei.ac.jp}
}

\date{}

\maketitle

\begin{abstract}
We present a (full) derandomization of 
% HSSW, which is a randomized algorithm for 3-SAT,
HSSW algorithm for 3-SAT, % tamaki-20100202
proposed by Hofmeister, Sch\"oning, Schuler, and Watanabe in [STACS'02].
Thereby,
we obtain an $\exprun(1.3303^n)$-time deterministic algorithm for 3-SAT,
which is currently fastest.
\end{abstract}

\section{Introduction}

The satisfiability problem (SAT) is one of the most fundamental NP-hard
problems.
Questing for faster (exponential-time) {\em exact algorithms} is one of
the main research directions on SAT.
Initiated by Monien and Speckenmeyer~\cite{MS85}, a number of algorithms
for {\em exactly} solving SAT have been proposed, and many important
techniques to analyze those algorithms have been developed~\cite{DH09}.
See also~\cite{DGH02,MS10,PPS98,PPZ97,Sch99}, for example.
The most well-studied restriction of the satisfiability problem is
3-SAT~\cite{BS03,BK04,HMS11,HSS02,ISTT10,IT04,Rol03,Rol06,Sch08},
i.e., the CNF satisfiability problem with clauses of length at most
three.
The currently best known time complexities for 3-SAT are
$\exprun(1.3211^n)$ achieved by randomized
algorithms~\cite{HMS11} and $\exprun(1.3334^n)$
derived by deterministic algorithms~\cite{MS10}, where $n$ denotes the
number of Boolean variables in the formula.

As we can see, there is a %significant 
noticeable gap between the current randomized and deterministic time
bounds for 3-SAT.
This raises a natural question: Can we close the gap completely?
One promising way to attack the above question is
\textit{derandomization}.
Roughly speaking, the task of derandomization is to construct an
algorithm which deterministically and efficiently simulates the
original randomized algorithm. 
There are a lot of strong derandomization results,
e.g.~\cite{AKS04,CGH10,MR99,Rei08} to name a few, and one excellent
example in the area of satisfiability is the derandomization of
Sch\"oning's algorithm for $k$-SAT.

In \cite{Sch99}, Sch\"oning proposed a simple randomized local search
algorithm for $k$-SAT, and showed that it runs in expected time
$\exprun((2-2/k)^n)$, which is $\exprun(1.3334^n)$ when $k=3$.
Later it was derandomized by Dantsin et al.~\cite{DGH02}.
They proposed a $k$-SAT algorithm that deterministically simulates
Sch\"oning's algorithm in time $\exprun((2-2/(k+1))^n)$,  which is
$\exprun(1.5^n)$ when $k=3$. 
Sch\"oning's algorithm makes use of randomness in the following two parts:
(i) choosing initial assignments for local search uniformly at random,
and
(ii) performing random walks as the local search.
Dantsin et al.~\cite{DGH02} derandomized it (i) by constructing a set of
Hamming balls  (so-called \textit{covering codes}), which efficiently
covers the entire search space $\{0,1\}^n$, and (ii) by replacing each
random walk by backtracking search.
Here (i) is ``perfectly'' derandomized in some sense, however, the
derandomization of (ii) loses some efficiency.
For 3-SAT, the efficiency in derandomizing part (ii) was gradually
improved by a sequence of works~\cite{DGH02,BK04,Sch08,KS10}.
Finally, and very recently, Moser and Scheder \cite{MS10} showed a 
{\em full} derandomization of Sch\"oning's algorithm, that is, they
proposed a deterministic algorithm for $k$-SAT that runs in time
$\exprun((2-2/k+\epsilon)^n)$ for any $\epsilon>0$.
The running time matches that of Sch\"oning's algorithm, and we now have
a deterministic $\exprun(1.3334^n)$ time algorithm for 3-SAT.

\subsection*{Our Contribution}

We investigate the possibility of derandomizing faster randomized
algorithms for 3-SAT.
In~\cite{HSS02}, Hofmeister, Sch\"oning, Schuler and Watanabe improved
Sch\"oning's algorithm for the 3-SAT case, 
that is, they proposed a randomized algorithm for 3-SAT that runs in
expected time $\exprun(1.3303^n)$.
Their improvement is based on a sophisticated way of randomly choosing
initial assignments rather than just choosing the ones uniformly at
random.

In this paper, we present a full derandomization of their algorithm,
that immediately implies the following result: 
\begin{theorem}\label{theo:main}
Problem 3-SAT is deterministically solvable in time $\exprun(1.3303^n)$.
\end{theorem}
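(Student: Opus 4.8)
The plan is to follow the same two-pronged strategy that worked for Schöning's algorithm, but adapted to the more refined structure of HSSW. Recall that HSSW does not start local search from uniformly random assignments: instead it first runs a polynomial-time preprocessing step that greedily picks a maximal set of clauses that are variable-disjoint (an independent clause set), and then chooses the initial assignment so that each such clause is satisfied — sampling the three ``good'' patterns on those three variables from the set $\allbutfalse$ rather than all of $\{0,1\}^3$, while the remaining ``free'' variables are set uniformly at random. This biased sampling is exactly what improves the base from $2-2/3=4/3$ to roughly $1.3303$. So the randomness appears in three places now: (i) the choice of the independent clause set (already deterministic — it's a greedy procedure), (ii) the choice of the initial assignment from the product space $\satcube \times \{0,1\}^{\text{free}}$, and (iii) the Schöning random walk itself.

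First I would handle part (iii) by quoting Moser–Scheder: their full derandomization of the random-walk stage replaces each walk of length $O(n)$ by a deterministic backtracking/covering-code search losing only a $(1+\epsilon)^n$ factor, and this is a black box that applies verbatim here since the walk subroutine in HSSW is identical to Schöning's. The real work is derandomizing part (ii): I need a deterministic construction of a ``covering code'' tailored to the biased product distribution. Concretely, if the independent clause set has $m$ clauses covering $3m$ variables and leaves $n-3m$ free variables, the HSSW analysis shows that a random initial assignment drawn from $\satcube\times\{0,1\}^{n-3m}$, followed by the walk, succeeds with probability $\exprun(1.3303^{-n})$ (the bound being worst over the value of $m$). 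I would construct, for each fixed $m$, a deterministic set of Hamming balls whose centers are drawn so as to cover the space $\satcube \times \{0,1\}^{n-3m}$ with the appropriate radii — smaller radius inside the ``clause'' coordinates where we already know a lot, standard radius on the free coordinates — of total size $\exprun(1.3303^{n})$. This is the analogue of the Dantsin et al.\ covering-code construction, but over a structured alphabet: on each block of three clause-variables we are really covering the $7$-element set $\{0,1\}^3\setminus\{0^3\}$, and these blocks behave like symbols of a code over a size-$7$ alphabet, to be combined with an ordinary binary covering code on the free part via a product/concatenation construction.

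The key steps, in order: (1) state the HSSW algorithm and isolate the precise success-probability bound as a function of $m$; (2) invoke Moser–Scheder to derandomize the walk phase with an arbitrarily small exponential loss; (3) build the structured covering code — first a good code on a single block of three clause-coordinates over the alphabet $\allbutfalse$ with an appropriately chosen covering radius, then amplify to $m$ blocks by a direct product, then take the product with a Moser–Scheder/Dantsin-style binary covering code on the $n-3m$ free coordinates, and verify that the total number of centers times the walk-search cost is $\exprun(1.3303^{n})$ for every $m$; (4) since $m$ ranges over only $O(n)$ values, try them all and output YES iff some branch finds a satisfying assignment. I expect step (3) to be the main obstacle: one has to choose the block-covering radius and the free-part radius jointly so that the code size exactly tracks the randomized success probability — too large a radius blows up the number of codewords, too small a radius makes the per-ball backtracking search too expensive — and then show this optimization yields precisely the HSSW exponent rather than something worse, i.e.\ that the derandomization is lossless up to $(1+\epsilon)^n$. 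Granularity and explicit-construction issues (building these codes in deterministic $\exprun(1.3303^{n})$ time, not just proving they exist) are the technical heart, and I would address them by the standard ``small blocks covered by brute force, then product'' trick so that each building block is of constant or logarithmic size.
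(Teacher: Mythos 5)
Your overall architecture (greedy maximal independent clause set, Moser--Scheder as a black box for the walk phase, a structured covering code for the biased initial-assignment distribution, product with a Dantsin-style binary code on the free variables) matches the paper's. But your step (3) has a genuine gap exactly at the place you flag as "the main obstacle": a \emph{uniform-radius} covering code for $(\allbutfalse)^{\hat{m}}$ --- whether built as a direct product of per-block codes with a fixed block radius, or greedily on larger blocks with a single radius $\rho\cdot 3\hat{m}$ --- cannot be lossless. What the randomized analysis actually gives (Lemma~\ref{lemm:hssw}) is $\E_y[(1/2)^{d(x,y)}]=(3/7)^{\hat{m}}$ for every $x$ in the space, i.e.\ a bound on a \emph{weighted sum over all distances}, and to match it you need, for each $x$, some radius $r$ with $2^{-r}\Pr_y\{d(x,y)=r\}\geq(3/7)^{\hat{m}}/\poly$. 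The maximizing $r$ depends on the type of $x$ (how many of its blocks have weight $1$, $2$, or $3$), so no single radius works for all $x$; and if you fall back on counting-measure ball volumes, the optimal uniform-radius tradeoff for this space is strictly worse than $(7/3)^{\hat{m}}$ (already on one block, the best you get is $2$ centers at radius $1$, i.e.\ cost $4$ per block versus the required $7/3$). The paper's fix is the notion of an $[\ell]$-covering code: a \emph{sequence} $C(0),\dots,C(\ell)$ of codes with $|C(i)|\approx(7/3)^{\hat{m}}/2^i$, whose codewords are drawn from the biased \texttt{init-assign} distribution, such that each $x$ is covered at its own radius $r_x=\arg\max_i 2^{-i}\Pr\{d(x,y)=i\}$; the deterministic construction partitions the space into the classes $A_i=\{x:r_x=i\}$, runs the greedy set-cover approximation separately on each $A_i$ (on blocks of $\hat{m}/d$ clauses), and only then concatenates blocks. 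This non-uniformity of the radius is the new idea of the paper, and it is absent from your plan.

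A secondary issue: the exponent $1.3303$ does not come from the covering-code branch alone. The local-search branch only achieves cost $\exprun((4/3)^n(63/64)^{\hat{m}})$, which for small $\hat{m}$ is no better than $(4/3)^n$; the stated bound arises from balancing this against the other branch of HSSW, which for $|\varphi'|\leq\alpha n$ enumerates all $7^{|\varphi'|}$ satisfying assignments of $\varphi'$ and solves the residual 2-CNF in polynomial time, with $\alpha$ chosen so that $(4/3)^n(63/64)^{\alpha n}(1+\epsilon)^n=7^{\alpha n}$. Your remark that the success probability is $\exprun(1.3303^{-n})$ "worst over $m$" and that one should "try all $O(n)$ values of $m$" misstates this: $\hat{m}$ is fixed by the formula, and the small-$\hat{m}$ enumeration branch must be included or the bound degrades to $(4/3)^n$.
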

As long as the authors know, it is the currently fastest deterministic
algorithm for 3-SAT.
Our result seems to be a necessary step towards a full derandomization
of the currently best known randomized algorithm, since it is based on
the combination of two algorithms~\cite{ISTT10} and~\cite{HMS11}, 
which are respectively a modification of Hofmeister~et~al.'s
algorithm~\cite{HSS02} and an extension of Paturi~et~al.'s
algorithm~\cite{PPS98}. 

To prove the above result, we develop a new way of explicitly
constructing covering codes with the properties which corresponds to the
distribution used to generate initial assignments in Hofmeister~et~al.'s
algorithm.

More precisely, we respectively denote by SCH and HSSW the randomized
algorithms by Sch\"oning \cite{Sch99}, and by Hofmeister, Sch\"oning,
Schuler, and Watanabe \cite{HSS02}.
Algorithm HSSW is obtained by modifying SCH, where one of the main
differences between SCH and HSSW is to choose initial assignments for
random walks as the local search: 
HSSW starts the random walk at an assignment chosen randomly from
$(\allbutfalse)^{\hat{m}}$ for some $\hat{m}\leq n/3$, while SCH starts
it at an assignment chosen uniformly from the whole space $\{0,1\}^n$.

We derandomized this random selection of initial assignments for HSSW
in the similar way to  SCH \cite{DGH02}, i.e., by constructing a
covering code (i.e., a set of balls  that covers the whole search space
$(\allbutfalse)^{\hat{m}}$).
However, due to the difference of $(\allbutfalse)^{\hat{m}}$ and
$\{0,1\}^n$, we cannot directly apply a \textit{uniform} covering code
developed in \cite{DGH02}. 
To efficiently cover the space  $(\allbutfalse)^{\hat{m}}$, we
introduced a generalized covering code, an \textit{$[\ell]$-covering
code}, which is a sequence of codes $C(0),C(1), \dots,C(\ell)$ such that 
(i) $C(i)$ is a set of balls of radius $i$, and 
(ii) $\bigcup_{i=0}^\ell C(i)$ covers $(\allbutfalse)^{\hat{m}}$. 
We remark that the generalized covering code has non-uniform covering
radius while an ordinary covering code has uniform radius.

We first show the existence of {\em small} $[\ell]$-covering code
$(C(0),C(1), \dots,C(\ell))$, and then similarly to \cite{DGH02},
by using an approximation algorithm for the set cover problem,
we show a deterministic construction of an $[\ell]$-covering code
$\tilde{C}(0),\tilde{C}(1),\dots,\tilde{C}(\ell)$
such that $|\tilde{C}(i)|\approx|C(i)|$.

\nop{%%%%%%%%%%%%%%%%%%%%
Finally,
given such a covering code
$\tilde{C}(0),\tilde{C}(1),\dots,\tilde{C}(\ell)$,
for each $0\leq i\leq\ell$ and for each $z\in\tilde{C}(i)$,
we search for a satisfying assignment
within the ball of radius $i$ centered at $z$.
We apply to this search
the algorithm which was recently developed in \cite{MS10}.
}

We  remark that our technique of constructing certain types of
covering codes has a potential application, for example, it 
can be applied to the further extensions~\cite{BS03,Rol03} of HSSW.
%(although we have not checked yet whether the whole of the algorithms
%of~\cite{BS03,Rol03} can be fully derandomized).

% written by tamaki-20110202

%%%%%%%%%%%%%%%%%%%%%%%%%%%%%%%%%%%%%%%%

\section{Preliminaries}

In this section,
we briefly review HSSW algorithm for 3-SAT
proposed in \cite{HSS02}.
In what follows,
we focus on 3-CNF formulas.
Let $\varphi$ be a 3-CNF formula over $X=\{x_1,\dots,x_n\}$.
We alternatively regard $\varphi$ as the set of clauses of $\varphi$.
Thus,
the {\em size} of $\varphi$,
which is the number of clauses of $\varphi$,
is denoted by $|\varphi|$.
For any sub-formula $\varphi'\subset\varphi$ (resp., any clause $C\in\varphi$),
we denote by $X(\varphi')$ (resp., $X(C)$)
the set of variables of $\varphi'$ (resp., $C$).

A clause set $\varphi'\subset\varphi$ is {\em independent}
if $C\cap C'=\emptyset$ for any pair of clauses $C,C'\in\varphi'$.
An independent clause set $\varphi'$ is {\em maximal}
if for any clause $C\in(\varphi\setminus\varphi')$
there exists a clause $C'\in\varphi'$ such that $C\cap C'\neq\emptyset$.
For any partial assignment $t$ to $X(\varphi)$,
we denote by $\varphi|_t$
a sub-formula obtained from $\varphi$ by fixing variables according to  $t$.
Given a 3-CNF formula $\varphi$,
 algorithm HSSW starts
with arbitrarily finding a maximal independent clause set of $\varphi$.

\begin{fact}
Let $\varphi$ be a 3-CNF formula.
Let $\varphi'\subset\varphi$ be a maximal independent clause set of $\varphi$.
Then,
for any assignment $t$ to $X(\varphi')$,
the formula $\varphi|_{t}$ is a 2-CNF formula.
\end{fact}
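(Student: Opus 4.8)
The plan is to trace, clause by clause, how many literals of $\varphi$ survive the restriction by $t$, splitting the clauses of $\varphi$ into those lying in the independent set $\varphi'$ and those in $\varphi\setminus\varphi'$.

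First I would dispose of $\varphi'$ itself. By definition $X(\varphi')=\bigcup_{C\in\varphi'}X(C)$, so every variable occurring in a clause $C\in\varphi'$ is assigned by $t$; hence $C|_t$ evaluates to a constant. Such clauses contribute nothing to $\varphi|_t$ (except possibly making it trivially unsatisfiable, which is still vacuously a 2-CNF), so they pose no problem for the width bound.

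The key step is the clauses $C\in\varphi\setminus\varphi'$. Here I invoke maximality of $\varphi'$: there is some $C'\in\varphi'$ with $C\cap C'\neq\emptyset$, i.e.\ $C$ and $C'$ share a variable $x$. Since $x\in X(C')\subseteq X(\varphi')$, the partial assignment $t$ fixes $x$, so the literal of $C$ on $x$ is either satisfied --- in which case $C|_t$ vanishes --- or falsified, in which case that literal is deleted and $C|_t$ retains at most $|C|-1\leq 2$ literals. Either way $C|_t$ has at most two literals. Combining the two cases, every clause of $\varphi|_t$ has width at most two, so $\varphi|_t$ is a 2-CNF formula.

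I do not anticipate a genuine obstacle: the statement follows directly from the definition of a maximal independent clause set together with the length-three bound on clauses of $\varphi$. The only point needing a little care is the bookkeeping convention for restricted clauses that become satisfied (and are dropped) or become empty (width $0$); in all such cases the bound of two literals per surviving clause is maintained, which is all that is needed.
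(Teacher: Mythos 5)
Your argument is correct: the paper states this Fact without any proof, treating it as immediate, and your case split (clauses of $\varphi'$ are fully assigned by $t$, while every clause of $\varphi\setminus\varphi'$ loses at least one literal because maximality forces it to share a variable with some clause of $\varphi'$, whose variables are all set by $t$) is exactly the intended justification. No gap; the bookkeeping remarks about satisfied or emptied clauses are fine and do not affect the width bound.
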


Before describing HSSW,
we briefly review SCH algorithm for $k$-SAT
proposed in \cite{Sch99}.
Algorithm SCH is a randomized algorithm
which repeats the following procedure exponentially (in $n$) many times:
choose a random assignment $t$,
and run a random walk starting at $t$ as follows:
for a current assignment $t'$,
if $\varphi$ is satisfied by $t'$,
then output $\YES$ and halt.
Otherwise,
choose an arbitrary clause $C$ unsatisfied by $t'$,
and then update $t'$ by flipping the assignment of a variable of $C$
chosen uniformly at random.
This random walk procedure denoted by $\texttt{SCH-RW}(\varphi,t)$ is also exploited in HSSW.
The success probability of $\texttt{SCH-RW}(\varphi,t)$
for  a satisfiable $\varphi$
was analyzed in \cite{Sch99}:
Let $\varphi$ be a 3-CNF formula that is satisfiable.
Let $t_0$ be an arbitrary satisfying assignment of $\varphi$.
Then,
for  any initial assignment $t$ with Hamming distance $d(t_0,t)=r$,
we have 
\begin{equation}\label{eq:sch}
 \Pr\{\texttt{SCH-RW}(\varphi,t)=\YES\}
 \geq \left(\frac{1}{2}\right)^r\cdot\frac{1}{\poly(n)}.
\end{equation}

Now,
we are ready to present HSSW.
Given a 3-CNF formula $\varphi$,
HSSW first obtains a maximal independent clause set $\varphi'\subset\varphi$.
Note here that
the formula $\varphi|_t$ for any  assignment to $X(\varphi')$
is a 2-CNF, and hence we can check in polynomial time whether $\varphi|_t$ is satisfiable. From this observation, 
when  $\varphi'$ is small,
we can significantly improve the whole running time,
that is, it only requires $\exprun(7^{|\varphi'|})$ time.
On the other hand,
when the size of $\varphi'$ is large,
we repeatedly apply the random walk procedure $\texttt{SCH-RW}$. 
In this case,
we can also reduce the running time
by smartly choosing
initial assignments from satisfiable assignments of $\varphi'$:
Recall that SCH uniformly chooses initial assignments from $\{0,1\}^n$, 
which utilizes no information on $\varphi$.
Intuitively, HSSW uses initial assignments for \texttt{SCH-RW} that are closer to any satisfiable assignment.
In fact we can prove  
that the larger the size of $\varphi'$ is,
the higher
the probability that the random walk starts at an assignment
closer to a satisfying assignment is.

Formally,  algorithm HSSW is described  in Fig.\ \ref{fig:hssw}.
\begin{figure}[h]
\begin{quote}
$\texttt{HSSW}(\varphi)$ ~
// $\varphi$: a 3-CNF formula over $X$ \\ \\
\htab Obtain a maximal independent clause set $\varphi'\subset\varphi$ \\ \\
\htab If $|\varphi'|\leq\alpha n$, then \\
\htab\htab \textbf{for each} $t\in\{0,1\}^{X(\varphi')}$
that satisfies $\varphi'$ \\
\htab\htab\htab Check the satisfiability of $\varphi|_t$
~ // $\varphi|_t$: a 2-CNF formula \\ \\
\htab If $|\varphi'|>\alpha n$, then \\
\htab\htab $c$ \textbf{times do} \\
\htab\htab\htab Run $t=\texttt{init-assign}(X,\varphi')$ \\
\htab\htab\htab Run $\texttt{SCH-RW}(\varphi,t)$ \\
\htab\htab Output \texttt{NO} \\ \\
$\texttt{init-assign}(X,\varphi')$ ~
// return an assignment $t\in\{0,1\}^X$ defined as follows \\ \\
\htab \textbf{for each} $C\in\varphi'$ \\
\htab\htab Assume $C=x_i\vee x_j\vee x_k$ \\
\htab\htab Choose a random assignment $t$ to $x=(x_i,x_j,x_k)$ \\
\htab\htab ~ following the probability distribution:
\[\begin{array}{l}
 \Pr\{x=(1,0,0)\}
 =\Pr\{x=(0,1,0)\}
 =\Pr\{x=(0,0,1)\}=p_1 \\
 \Pr\{x=(1,1,0)\}
 =\Pr\{x=(1,0,1)\}
 =\Pr\{x=(0,1,1)\}=p_2 \\
 \Pr\{x=(1,1,1)\}=p_3
\end{array}
\]
\htab \textbf{for each} $x\in X\setminus X(\varphi')$ \\
\htab\htab Choose a random assignment $t$ to $x\in\{0,1\}$
\end{quote}
\caption{Algorithm \texttt{HSSW}}\label{fig:hssw}
\end{figure}
The algorithm contains 5 parameters
$\alpha$, $c$, and triple $(p_1,p_2,p_3)$ with $3p_1+3p_2+p_3=1$.
These parameters are  set to minimize  the whole expected running time.

Consider algorithm HSSW in Fig.\ \ref{fig:hssw} when $|\varphi'|>\alpha n$.
Let $\texttt{HSSW-RW}(\varphi')$ be the procedure
that is repeated $c$ times.
Then,
by using the lower bound (\ref{eq:sch}),
and setting parameters $(p_1,p_2,p_3)$ suitably
(c.f., Lemma \ref{lemm:hssw} below),
we have:
for any satisfiable 3-CNF formula $\varphi$,
\begin{equation}\label{eq:hssw-success}
 \Pr_{t,\texttt{SCH-RW}}\{\texttt{HSSW-RW}(\varphi')=\YES\}
 \geq
 \left(\frac{3}{4}\right)^{n}\cdot\left(\frac{64}{63}\right)^{|\varphi'|}.
\end{equation}
The whole expected running time $\exprun(1.3303^n)$ is obtained
by setting $\alpha$ to satisfy  the following equation.
\[
 \left(
 \left(\frac{3}{4}\right)^{n}\cdot\left(\frac{64}{63}\right)^{\alpha n}
 \right)^{-1}
 = 7^{\alpha n}.
\]
The values of parameters $(p_1,p_2,p_3)$ are determined
according to 
the following lemma,
which will be used by our derandomization.

\begin{lemma}
[Hofmeister, Sch\"oning, Schuler, and Watanabe \cite{HSS02}]
\label{lemm:hssw}
Let $\varphi$ be a 3-CNF formula that is satisfiable,
and let $\varphi'\subset\varphi$ be
a maximal independent clause set of $\varphi$.
Let $t$ be a random (partial) assignment
obtained via $\texttt{init-assign}(X,\varphi')$
and restricted to $X(\varphi')$.
Then,
for any (partial) assignment $t_0\in\{0,1\}^{X(\varphi')}$
that satisfies $\varphi'$,
\begin{equation}\label{eq:hssw}
 \mathop{\E}_t
 \left[\left(\frac{1}{2}\right)^{d(t_0,t)}\right]
 =\left(\frac{3}{7}\right)^{|\varphi'|}.
\end{equation}
\end{lemma}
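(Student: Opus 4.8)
The plan is to exploit the fact that $\varphi'$ is an \emph{independent} clause set, so that the whole computation decouples into one identical computation per clause. Since the clauses of $\varphi'$ are pairwise disjoint, the variable sets $X(C)$, $C\in\varphi'$, partition $X(\varphi')$, and $\texttt{init-assign}$ draws the assignment on each block $X(C)$ independently of the others. Hence, writing $t_C$ and $t_{0,C}$ for the restrictions of $t$ and $t_0$ to $X(C)$, we have $d(t_0,t)=\sum_{C\in\varphi'}d(t_{0,C},t_C)$ with the summands independent, so
\begin{equation*}
  \mathop{\E}_t\left[\left(\tfrac{1}{2}\right)^{d(t_0,t)}\right]
  =\prod_{C\in\varphi'}\mathop{\E}_{t_C}\left[\left(\tfrac{1}{2}\right)^{d(t_{0,C},t_C)}\right].
\end{equation*}
It therefore suffices to prove that every factor equals $3/7$; the lemma then follows by taking the product over the $|\varphi'|$ clauses.

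Next I would reduce the single-clause statement to three numerical cases. Fix a clause $C=x_i\vee x_j\vee x_k$. Because $t_0$ satisfies $\varphi'$, its restriction $t_{0,C}$ satisfies $C$, i.e.\ $t_{0,C}\in\{0,1\}^3\setminus\{0^3\}$, leaving seven possibilities. The distribution used by $\texttt{init-assign}$ on $(x_i,x_j,x_k)$ is invariant under permutations of the three coordinates, so $\mathop{\E}_{t_C}[(1/2)^{d(t_{0,C},t_C)}]$ depends only on the Hamming weight $w=|t_{0,C}|\in\{1,2,3\}$. For each of $w=1,2,3$ I would enumerate the seven outcomes of $t_C$ (three of weight $1$ with probability $p_1$ each, three of weight $2$ with probability $p_2$ each, one of weight $3$ with probability $p_3$), record the Hamming distance from a fixed weight-$w$ vector, and collect the answer as a linear form in $(p_1,p_2,p_3)$; for instance, for $w=3$ one gets $\tfrac{3}{4}p_1+\tfrac{3}{2}p_2+p_3$.

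Finally I would plug in the parameter values. Requiring the three linear forms just obtained to be equal yields two linear equations which, together with the normalization $3p_1+3p_2+p_3=1$, force $(p_1,p_2,p_3)=(\tfrac{4}{21},\tfrac{2}{21},\tfrac{1}{7})$ — all nonnegative, hence a legitimate probability distribution — and then each of the three forms evaluates to $3/7$. This is exactly the choice of $(p_1,p_2,p_3)$ referred to in the statement, and it completes the proof.

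As for the main obstacle: there is no deep step here, the content being the case analysis, but the one point that needs care is checking that a \emph{single} distribution makes the expectation the same for all three Hamming weights simultaneously. A priori the two equalities plus normalization over-determine $(p_1,p_2,p_3)$; the computation works out only because the corresponding $3\times 3$ linear system happens to be consistent with a nonnegative solution, and one must actually verify this — and that the common value is precisely $3/7$, not merely some constant — rather than take it for granted.
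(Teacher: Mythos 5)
Your proof is correct: the product decomposition over the disjoint clauses of $\varphi'$, the reduction by symmetry to the three Hamming weights $w\in\{1,2,3\}$, and the resulting values $(p_1,p_2,p_3)=(\tfrac{4}{21},\tfrac{2}{21},\tfrac{1}{7})$ all check out, each of the three per-clause expectations indeed being $\tfrac{3}{7}$. The paper itself gives no proof of this lemma (it imports it from the cited work of Hofmeister et al.), and your argument is exactly the standard calculation behind it, so there is nothing to reconcile beyond noting that you additionally derive, rather than assume, the equalizing choice of $(p_1,p_2,p_3)$.
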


There are two types of randomness that are used in HSSW:
(1) the random assignment obtained via \texttt{init-assign},
and
(2) the random walk of \texttt{SCH-RW}.
Fortunately,
the latter type of randomness can be (fully) removed by the recent result.
(Compare it with the inequality (\ref{eq:sch}).)

\begin{theorem}
[Moser and Scheder \cite{MS10}]\label{theo:moser}
Let $\varphi$ be a 3-CNF formula that is satisfiable.
Let $t_0$ be an arbitrary satisfying assignment of $\varphi$.
Given an assignment $t$ such that $d(t_0,t)=r$
for a non-negative integer $r$.
Then,
the satisfying assignment $t_0$ can be found deterministically
in time $\exprun((2+\epsilon)^r)$ for any constant $\epsilon>0$.
\end{theorem}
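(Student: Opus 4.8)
The plan is to reduce Theorem~\ref{theo:moser} to a deterministic \emph{ball-search} problem and then to derandomize the random walk $\texttt{SCH-RW}$ by replacing its internal coin tosses with a small, explicitly constructed family of ``flip schedules''.

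First I would reduce to the following task: design a deterministic algorithm that, given a $3$-CNF $\varphi$, an assignment $t$, and an integer $r\ge 0$, returns some satisfying assignment of $\varphi$ whenever the Hamming ball of radius $r$ around $t$ contains one, running in time $\exprun((2+\epsilon)^r)$ for every fixed $\epsilon>0$. This suffices, since under the hypotheses $t_0$ lies in that ball, so the algorithm outputs a satisfying assignment (which is all that the covering-code application needs; if $\varphi$ has a unique one it is $t_0$). The base $2$ is forced by the bound (\ref{eq:sch}): one run of $\texttt{SCH-RW}(\varphi,t)$ reaches $t_0$ with probability at least $2^{-r}/\poly(n)$, so $\poly(n)\cdot 2^{r}$ independent runs succeed with high probability, and the $\epsilon$ measures the overhead of removing the randomness. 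A naive determinization --- recurse on the at most three variables of an arbitrary unsatisfied clause, decrementing $r$ --- is correct, because $t_0$ satisfies that clause while $t$ does not so one of the three flips decreases $d(\cdot,t_0)$; but it costs $\exprun(3^{r})$, which is too slow.

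The key observation is that, once a deterministic rule for selecting the unsatisfied clause at each step is fixed, the only randomness left in $\texttt{SCH-RW}$ is, at each of its first $L=O(r)$ steps, the choice of which of the (at most three) variables of the current clause to flip; so a run is described by a \emph{schedule} $\sigma\in\{1,2,3\}^{L}$. By (\ref{eq:sch}), for every instance $(\varphi,t)$ whose radius-$r$ ball around $t$ contains a satisfying assignment, at least a $2^{-r}/\poly(n)$ fraction of all schedules are successful. Since there are at most $2^{\poly(n)}$ such instances, a union-bound argument yields a single family $S\subseteq\{1,2,3\}^{L}$ of size $\poly(n)\cdot 2^{r}$ that is successful for every instance simultaneously; running each $\sigma\in S$ on the given $(\varphi,t)$ then solves the ball-search problem in time $\poly(n)\cdot 2^{r}$. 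The real work is to make $S$ \emph{explicit}, i.e.\ computable within the time budget: I would construct it greedily, or recursively in blocks of length $\Delta=\Delta(\epsilon)$, along the lines of the set-cover-based construction of covering codes of \cite{DGH02} (and of the construction carried out later in this paper); the overhead incurred per block, polynomial in $\Delta$, compounds over the $O(r/\Delta)$ blocks into the multiplicative slack that turns the base $2$ into $2+\epsilon$.

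The main obstacle is precisely this explicit construction: replacing the probabilistic, merely existential family $S$ by one that is deterministically computable in time $\exprun((2+\epsilon)^{r})$, and handling the block-recursion so that the accumulated overhead stays below an arbitrary prescribed $\epsilon$. By contrast the reduction to ball search, the reformulation in terms of schedules, and the union-bound existence proof are all routine.
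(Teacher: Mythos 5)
Theorem~\ref{theo:moser} is not proved in this paper at all: it is quoted from Moser and Scheder \cite{MS10} and used purely as a black box (the paper's contribution is the derandomized choice of initial assignments, not of the walk), so there is no internal proof to compare yours with. Judged on its own merits, your proposal is fine up to the point you yourself flag as ``the real work'': the reduction to ball search, the reformulation of a walk (with a fixed deterministic clause-selection rule) as a schedule $\sigma\in\{1,2,3\}^{L}$ with $L=O(r)$, and the union-bound existence of a universal family $S$ of size $\poly(n)\cdot 2^{r}$ are all correct. The genuine gap is that neither of the construction routes you sketch for making $S$ explicit can work within the budget $\exprun((2+\epsilon)^{r})$. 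A greedy/set-cover construction in the style of \cite{DGH02} must enumerate the objects to be covered, which here are the instances $(\varphi,t)$ (there are $2^{\Theta(n^{3})}$ of them), so its running time is exponential in $\poly(n)$ and not of the form $\exprun((2+\epsilon)^{r})$; note that the existence argument crucially union-bounds over these fixed instances, so one cannot simply shrink the universe.

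The block-recursion rescue also fails, because the trick that makes blockwise covering codes work in \cite{DGH02} is that Hamming distance is additive over blocks, i.e.\ the covering property has a product structure; success of a schedule has no such structure, since what a later block of $\sigma$ accomplishes depends on the state the walk has reached, which depends on the instance and on the earlier blocks. If you try to repair this by demanding that each block of length $\Delta(\epsilon)$ make progress against every possible local behaviour of the instance, you must quantify over adaptive adversaries that at each step present a clause with at least one ``correct'' flip position hidden among three. Against such adversaries small universal block sets do not exist: already for one unit of progress, any node of the schedule tree offering fewer than all three flip choices is defeated by hiding the good position outside the offered set, so full ternary branching is forced, and this is exactly why oblivious derandomizations of the walk are stuck at $3^{r}$ (the bound of \cite{DGH02}) and why every improvement must inspect the formula itself. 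In particular, the actual proof in \cite{MS10} does not derandomize the walk schedule at all; it batches many pairwise disjoint unsatisfied clauses and covers their possible correction patterns (seven per clause) with covering codes, recursing on the remaining radius --- a structurally different argument, and one that your outline, as it stands, does not recover.
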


In the next section,
we show that
the former type of randomness is also not necessary.
It is shown by using covering codes,
that is in the similar way to \cite{DGH02}.
But,
the covering code we make use of is different from ordinary ones.
For any positive integer $n$,
a {\em code} of length $n$ is a subset of $\{0,1\}^n$,
where each element of a code is called a {\em codeword}.
A code $C\subset\{0,1\}^n$ is called an {\em $r$-covering code}
if for every $x\in\{0,1\}^n$,
there exists a codeword $y\in C$ such that $d(x,y)\leq r$.
This is the definition of an ordinary covering code.
We define a generalization of covering codes in the following way:

\begin{definition}
\rm
Let $\ell$ be a non-negative integer.
A sequence $C(0),C(1),\dots,C(\ell)$ of codes
is a {\em $\{0,1,\dots,\ell\}$-covering code},
or simply an {\em $[\ell]$-covering code},
if for every $x\in\{0,1\}^n$,
there exists a codeword $y\in C(r)$ for some $r:0\leq r\leq\ell$
such that $d(x,y)\leq r$.
\end{definition}

For ordinary covering codes,
it is easy to show the existence of a ``good'' $r$-covering code.
Moreover,
it is known that
we can deterministically construct such an $r$-covering code.

\begin{lemma}
[Dantsin et al.\ \cite{DGH02}]\label{lemm:dantsin}
Let $d\geq 2$ be a divisor of $n\geq 1$,
and let $0<\rho<1/2$.
Then,
there is a polynomial $q_d(n)$
such that a covering code of length $n$,
radius at most $\rho n$,
and size at most $q_d(n)2^{(1-h(\rho))n}$,
can be deterministically constructed
in time $q_d(n)(2^{3n/d}+2^{(1-h(\rho))n})$.
\end{lemma}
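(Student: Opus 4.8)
The plan is to avoid any search over $\{0,1\}^n$ directly, and instead build the covering code as a $d$-fold product of one covering code over a small cube. Since $d\mid n$, set $m\mathdef n/d$ and split the $n$ coordinates into $d$ blocks of size $m$, identifying $\{0,1\}^n$ with $(\{0,1\}^m)^d$. First I would produce a single $r$-covering code $C_0\subseteq\{0,1\}^m$ with $r\mathdef\lfloor\rho m\rfloor$, and then take $C\mathdef\{(y_1,\dots,y_d):y_j\in C_0\text{ for all }j\}$. For any $x=(x_1,\dots,x_d)\in\{0,1\}^n$, choosing $y_j\in C_0$ within Hamming distance $r$ of $x_j$ gives a codeword with $d(x,(y_1,\dots,y_d))=\sum_j d(x_j,y_j)\le dr\le\rho md=\rho n$, so $C$ is a $(\rho n)$-covering code (its covering radius is an integer $\le\rho n$) and $|C|=|C_0|^d$. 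Thus it suffices to make $|C_0|$ small and to construct $C_0$ quickly.

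To construct $C_0$ I would phrase it as a set cover instance: the universe is $\{0,1\}^m$, and the available sets are the Hamming balls $B(z,r)=\{x:d(x,z)\le r\}$, one for each $z\in\{0,1\}^m$; then run the greedy algorithm. Every point of $\{0,1\}^m$ lies in exactly $V\mathdef\sum_{i=0}^{r}\binom{m}{i}$ of these balls, so the all-$1/V$ weighting is a fractional cover of total weight $2^m/V$, and by the classical Lov\'asz--Johnson analysis greedy returns a cover $C_0$ with $|C_0|\le(2^m/V)(1+\ln 2^m)\le\poly(m)\cdot 2^m/V$. Running greedy is cheap: there are $2^m$ candidate balls of size $\le 2^m$ and at most $|C_0|\le 2^m$ greedy rounds, and scoring all balls against the current uncovered set in one round costs $O(2^{2m})$, so $C_0$ is produced in time $O(2^{3m})=O(2^{3n/d})$. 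Writing down the product $C$ afterwards costs $|C|$ steps, giving total running time $q_d(n)(2^{3n/d}+2^{(1-h(\rho))n})$ once the size bound below is established.

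For the size, I would lower-bound $V\ge\binom{m}{r}$ and combine the standard estimate $\binom{m}{r}\ge 2^{h(r/m)m}/(m+1)$ with $h(r/m)\ge h(\rho)-O(1/m)$ — valid because $\rho-1/m\le r/m\le\rho$, $\rho\in(0,1/2)$ is a fixed constant, and $h$ is increasing with bounded derivative near $\rho$ — to obtain $V\ge 2^{h(\rho)m}/\poly(m)$. Hence $|C_0|\le\poly(m)\,2^{(1-h(\rho))m}$, and therefore $|C|=|C_0|^d\le\poly(m)^d\,2^{(1-h(\rho))md}=q_d(n)\,2^{(1-h(\rho))n}$.

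The one point that needs care is this last step: the construction accumulates two polynomial-in-$m$ overheads — the $\ln 2^m=\Theta(m)$ loss of greedy and the $\poly(m)$ slack in the entropy estimate — and these are raised to the $d$-th power when the product is formed. The reason this is harmless is that $d$ is treated as a constant (the polynomial $q_d$ is allowed to depend on $d$), so $\poly(m)^d=\poly(n)$ and everything collapses to the stated bounds. The remaining ingredients — additivity of covering radii across blocks, the precise greedy running time, and the binomial/entropy estimate — are all routine.
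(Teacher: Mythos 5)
Your proposal is correct and is essentially the standard argument behind this lemma: the paper itself gives no proof (it simply cites Dantsin et al.~\cite{DGH02}), and their construction is exactly your block decomposition into $d$ blocks of length $n/d$, a greedy set-cover approximation over Hamming balls in each block, and a $d$-fold direct product, with the $2^{3n/d}$ term coming from the greedy phase and the $\poly(m)^d=\poly(n)$ losses absorbed into $q_d(n)$ because $d$ is constant. Nothing further is needed.
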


\section{A derandomization of HSSW}

In this section,
we prove Theorem \ref{theo:main} by derandomizing HSSW.
We do that in the similar way to \cite{DGH02}.
Let $\varphi$ be a 3-CNF formula,
and $\varphi'$ be a maximal independent clause set of $\varphi$.
Let $|\varphi'|=\hat{m}$,
and we suppose $\hat{m}=\Omega(n)$.
As is explained in the Introduction,
we will use a {\em generalized} covering code:
an $[\ell]$-covering code.
First,
we show that
there exists an $[\ell]$-covering code
for $(\{0,1\}^3\setminus 0^3)^{\hat{m}}$
where each of its codes is of small size.

\begin{lemma}\label{lemm:main}
For $(\{0,1\}^3\setminus 0^3)^{\hat{m}}$,
there exists an $[\ell]$-covering code $C(0),C(1),\dots,$ $C(\ell)$,
where
$\ell$ is the maximum integer
such that $(3/7)^{\hat{m}}<(1/2)^{\ell-2}$,
and $|C(i)|=O(\hat{m}^2(7/3)^{\hat{m}}/2^i)$.
\end{lemma}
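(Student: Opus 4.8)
The plan is to build the codes $C(0),\dots,C(\ell)$ by a first--moment argument in which the codewords are sampled from exactly the distribution used by \texttt{init-assign}. Write $S=(\{0,1\}^3\setminus 0^3)^{\hat m}$, so that $|S|=7^{\hat m}$, and let $\mathcal D$ denote the distribution on $\{0,1\}^{3\hat m}$ that picks, independently for each of the $\hat m$ blocks of three coordinates, one of the seven nonzero patterns with the probabilities $p_1,p_1,p_1,p_2,p_2,p_2,p_3$ prescribed by \texttt{init-assign}; note $\mathcal D$ is supported on $S$. The only input I would take from the HSSW analysis is Lemma~\ref{lemm:hssw}: after the harmless normalization that the clauses of a maximal independent set are monotone (so that an assignment to the $3\hat m$ underlying variables lies in $S$ precisely when it satisfies the clause set), that lemma states exactly that for \emph{every} $x\in S$,
\[
 \E_{y\sim\mathcal D}\Bigl[(1/2)^{d(x,y)}\Bigr]=\left(\frac37\right)^{\hat m}.
\]

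Next I would fix a constant $c>0$ and let $C(r)$ consist of $K_r:=\lceil c\,\hat m^2 (7/3)^{\hat m} 2^{-r}\rceil$ independent samples from $\mathcal D$, with the samples for distinct $r$ also independent; this immediately gives $|C(r)|\le K_r=O(\hat m^2 (7/3)^{\hat m}/2^r)$, so it remains only to show that with positive probability the sequence covers $S$. Fixing $x\in S$ and writing $q_j(x)=\Pr_{y\sim\mathcal D}[d(x,y)=j]$ and $\beta_r(x)=\sum_{j=0}^r q_j(x)$, independence across $r$ together with $1-\beta\le e^{-\beta}$ yields
\[
 \Pr[x\text{ is not covered}]=\prod_{r=0}^\ell\bigl(1-\beta_r(x)\bigr)^{K_r}\le\exp\Bigl(-\sum_{r=0}^\ell K_r\beta_r(x)\Bigr),
\]
and exchanging the order of summation gives $\sum_{r=0}^\ell K_r\beta_r(x)=\sum_{j=0}^\ell q_j(x)\sum_{r=j}^\ell K_r\ge\sum_{j=0}^\ell q_j(x)K_j\ge c\,\hat m^2(7/3)^{\hat m}\sum_{j=0}^\ell q_j(x)2^{-j}$.

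The remaining point is that truncating the generating--function identity at $j=\ell$ costs only a constant factor, and this is exactly what the definition of $\ell$ buys: by maximality of $\ell$ one has $(3/7)^{\hat m}\ge(1/2)^{\ell-1}$, i.e.\ $2^{-\ell}\le\tfrac12(3/7)^{\hat m}$, so $\sum_{j>\ell}q_j(x)2^{-j}\le 2^{-\ell}\sum_{j>\ell}q_j(x)\le 2^{-\ell}\le\tfrac12(3/7)^{\hat m}$, whence $\sum_{j=0}^\ell q_j(x)2^{-j}\ge\tfrac12(3/7)^{\hat m}$. Combining, $\sum_{r=0}^\ell K_r\beta_r(x)\ge\tfrac{c}{2}\hat m^2$ uniformly in $x$, so a union bound over the $7^{\hat m}$ points of $S$ gives $\Pr[\text{some }x\in S\text{ uncovered}]\le 7^{\hat m}\exp(-\tfrac{c}{2}\hat m^2)$, which is $<1$ for all sufficiently large $\hat m$ (the finitely many remaining values of $\hat m$ have $|S|=O(1)$ and are handled by taking $C(0)=S$). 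This gives an $[\ell]$-covering code with $|C(i)|=O(\hat m^2 (7/3)^{\hat m}/2^i)$, as claimed.

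I expect the only genuinely delicate step to be the last one: translating the algebraic identity of Lemma~\ref{lemm:hssw} into the combinatorial covering condition, i.e.\ checking that the truncated weighted sum $\sum_{j\le\ell}q_j(x)2^{-j}$ still captures a constant fraction of the full value $(3/7)^{\hat m}$. The cutoff $\ell$ is engineered precisely so that this holds, and once it is in place everything else is a routine probabilistic existence computation; the choice of the polynomial factor ($\hat m^2$, rather than the $\hat m$ that the union bound strictly needs) leaves slack that is convenient for the subsequent derandomization via set cover.
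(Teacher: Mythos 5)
Your proof is correct and follows essentially the same route as the paper: random codes $C(r)$ of size $O(\hat m^2(7/3)^{\hat m}/2^r)$ sampled from the \texttt{init-assign} distribution, the identity of Lemma~\ref{lemm:hssw}, and the maximality of $\ell$ to get the truncated bound $\sum_{j\le\ell}q_j(x)2^{-j}\ge \frac{1}{2}(3/7)^{\hat m}$. The only deviation is that you bound the failure probability by aggregating coverage over all radii via an exchange of summation, whereas the paper uses a pigeonhole step to select a single radius $r_x=\arg\max_i\{(1/2)^i\Pr\{d(x,y)=i\}\}$ and covers $x$ using $C(r_x)$ alone — losing a factor $\ell\le 2\hat m$ that the $\hat m^2$ slack absorbs, but yielding the partition $A_i=\{x:r_x=i\}$ that the deterministic construction of Lemma~\ref{lemm:construction} reuses, which your aggregate bound does not directly supply (though it is not needed for the present lemma).
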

\begin{proof}
We show the existence of such an $[\ell]$-covering code
by a probabilistic argument,
as is the case of the existence of an ordinary covering code for $\{0,1\}^n$.
However,
the probabilistic construction of an $[\ell]$-covering code
is different from the simple one of an ordinary covering code
in terms of,
(1) non-uniform covering radius,
and
(2) non-uniform choice of codewords.

For obtaining the desired covering code,
we make use of the probability distribution calculated in \cite{HSS02},
that is,
the equation (\ref{eq:hssw}) of Lemma \ref{lemm:hssw}.
The probabilistic construction is as follows:
Let $\ell$ be the integer defined above.
For each $i:0\leq i\leq\ell$,
let $C(i)\subset (\{0,1\}^3\setminus 0^3)^{\hat{m}}$ be a random code
obtained by choosing $y\in (\{0,1\}^3\setminus 0^3)^{\hat{m}}$
according to the distribution defined by the function $\texttt{init-assign}$
(in Fig.\ \ref{fig:hssw}),
and by repeating it independently $8\hat{m}^2(7/3)^{\hat{m}}/2^i$ times.

We will show that $C(0),C(1),\dots,C(\ell)$
is an $[\ell]$-covering code with high probability.
Fix $x\in(\{0,1\}^3\setminus 0^3)^{\hat{m}}$ arbitrarily.
Note here that
$\ell\leq 2\hat{m}$ and $(1/2)^{\ell-1}\leq(3/7)^{\hat{m}}$.
Then,
\begin{eqnarray*}
 & & \sum_{i=0}^{3\hat{m}} (1/2)^i\Pr_y\{d(x,y)=i\} \\
 &=&
 \sum_{i=0}^{\ell} (1/2)^i\Pr_y\{d(x,y)=i\}
 + \sum_{i=\ell+1}^{3\hat{m}} (1/2)^i\Pr_y\{d(x,y)=i\} \\
 &\leq&
 \sum_{i=0}^{\ell} (1/2)^i\Pr_y\{d(x,y)=i\} + (1/2)^{\ell} \\
 &\leq&
 \sum_{i=0}^{\ell} (1/2)^i\Pr_y\{d(x,y)=i\} + (3/7)^{\hat{m}}/2.
\end{eqnarray*}
Recall from the equation (\ref{eq:hssw}) of Lemma \ref{lemm:hssw} that,
\[
 \mathop{\E}_{y}\left[\left(\frac{1}{2}\right)^{d(x,y)}\right]
 = \sum_{i=0}^{3\hat{m}} (1/2)^i\Pr_y\{d(x,y)=i\}
 = \left(\frac{3}{7}\right)^{\hat{m}}.
\]
From these two,
we have
\[
 \sum_{i=0}^{\ell} (1/2)^i\Pr_y\{d(x,y)=i\}
 \geq (3/7)^{\hat{m}}/2.
\]
From this,
we see there exists an $r:0\leq r\leq \ell$
such that
\begin{equation}\label{eq:lower-bound}
 \Pr_y\{d(x,y)=r\} \geq (3/7)^{\hat{m}} 2^{r-1}/\ell.
\end{equation}
Note that
this value of $r$ depends on $x$.
Thus,
for each $x\in(\{0,1\}^3\setminus 0^3)^{\hat{m}}$,
if we define
\[
 r_x
 \mathdef
 \arg\max_{i:0\leq i\leq\ell}\left\{(1/2)^i\Pr\{d(x,y)=i\}\right\},
\]
we see that $r=r_x$ satisfies the above inequality (\ref{eq:lower-bound})
\footnote{
This definition of $r_x$ is not meaningful
if we merely show the existence.
However,
it is used when we consider a deterministic construction.
See the next lemma.
}.
Let $B(z,i)$ be the set of $w\in\{0,1\}^{3\hat{m}}$
such that $d(z,w)\leq i$.
Then,
from the lower bound (\ref{eq:lower-bound}),
the probability that $x$ is not covered with any $C(i)$ is
\begin{eqnarray*}
 \Pr_C
 \left\{x\not\in\bigcup_{i=0}^{\ell}\bigcup_{z\in C(i)}B(z,i)\right\}
 &\leq&
 \Pr_{C(r_x)}
 \left\{x\not\in\bigcup_{z\in C(r_x)}B(z,r_x)\right\} \\
 &=&
 \Pr_{C(r_x)}
 \left\{\forall y\in C(r_x)[d(x,y)>r_x]\right\} \\
 &=&
 \left(\Pr_y\left\{[d(x,y)>r_x]\right\}\right)^{|C(r_x)|} \\
 &=&
 \left(1-\Pr_y\{d(x,y)\leq r_x\}\right)^{|C(r_x)|} \\
 &\leq&
 \left(1-\Pr_y\{d(x,y)=r_x\}\right)^{|C(r_x)|} \\
 &\leq&
 \left(1-(3/7)^{\hat{m}} 2^{r_x-1}/\ell\right)^{|C(r_x)|} \\
 &\leq&
 \exp\left(-(3/7)^{\hat{m}} 2^{r_x-1}|C(r_x)|/\ell\right) \\
 &\leq&
 \exp\left(-2\hat{m}\right)
\end{eqnarray*}
Thus,
from the union bound,
the probability that some $x\in(\{0,1\}^3\setminus 0^3)^{\hat{m}}$
is not covered with any $C(i)$
is at most $7^{\hat{m}}\cdot\exp(-2\hat{m})=o(1)$.
Therefore,
there does exist an $[\ell]$-covering code stated in this lemma.
\qed
\end{proof}

Note that
this lemma only shows the {\em existence} of such an $[\ell]$-covering code.
We need to {\em deterministically} construct it.
However,
we can get around this issue in the same way as \cite{DGH02}:
applying the approximation algorithm for the set cover problem.
But,
since an $[\ell]$-covering code is not of uniform radius,
we can not directly apply the approximation algorithm.

\begin{lemma}\label{lemm:construction}
Let $d\geq 2$ be a constant that divides $\hat{m}$,
and let $\hat{m}'=\hat{m}/d$.
Let $\ell'$ be the maximum integer such that $(3/7)^{\hat{m}'}<(1/2)^{\ell'-2}$,
and let $s'_i=8\hat{m}'^2(7/3)^{\hat{m}'}/2^i$ for each $i:0\leq i\leq \ell'$.
Let $\ell=\ell' d$.
Then,
there is a polynomial $q_d(\hat{m})$ that satisfies the following:
an $[\ell]$-covering code $C(0),C(1),\dots,C(\ell)$
for $(\{0,1\}^3\setminus 0^3)^{\hat{m}}$
such that $|C(i)|\leq q_d(\hat{m})\cdot (7/3)^{\hat{m}}/2^i$
for $0\leq i\leq \ell$,
can be deterministically constructed
in time
$\poly(\hat{m})\cdot 7^{3\hat{m}/d}+q_d(\hat{m})\cdot (7/3)^{\hat{m}}$.
\end{lemma}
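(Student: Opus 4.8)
The plan is to follow the deterministic construction of Dantsin et al.\ (Lemma~\ref{lemm:dantsin})---reduce the problem to a small ``block'', solve a set cover instance there by the greedy approximation algorithm, and then take a $d$-fold product---with the one modification forced by the non-uniform radii: the block-level instance becomes a \emph{weighted} set cover, and the product step must keep track of how the radii of the $d$ blocks add up.

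\textbf{Step 1 (the block).} I would first work in the block space $(\{0,1\}^3\setminus 0^3)^{\hat{m}'}$ with $\hat{m}' = \hat{m}/d$. Applying Lemma~\ref{lemm:main} with $\hat{m}'$ in place of $\hat{m}$, there exists an $[\ell']$-covering code $C'(0),\dots,C'(\ell')$ for this space with $|C'(i)|\le s'_i$ for every $i$. To turn this existence statement into a construction, I set up the following weighted set cover instance: the universe is $U\mathdef(\{0,1\}^3\setminus 0^3)^{\hat{m}'}$, and for every $z\in U$ and every $i$ with $0\le i\le\ell'$ there is a set $B(z,i)\cap U$ of cost $2^i$. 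The code $C'(0),\dots,C'(\ell')$ is a feasible solution of cost $\sum_{i=0}^{\ell'}|C'(i)|\,2^i\le\sum_{i=0}^{\ell'}s'_i\,2^i = 8(\hat{m}')^2(\ell'+1)(7/3)^{\hat{m}'}$, which is $\poly(\hat{m})\cdot(7/3)^{\hat{m}'}$ since $\ell' = O(\hat{m})$; hence the optimum of the instance is at most this much. The standard greedy algorithm for weighted set cover returns a cover of cost at most $(1+\ln|U|)$ times the optimum, and since $\ln|U| = \hat{m}'\ln 7 = O(\hat{m})$ this is still $\poly(\hat{m})\cdot(7/3)^{\hat{m}'}$. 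Let $\tilde{C}'(i)$ be the set of radius-$i$ balls in the returned cover. As each such ball pays $2^i$, the cost bound yields $|\tilde{C}'(i)|\,2^i\le\poly(\hat{m})\cdot(7/3)^{\hat{m}'}$, i.e.\ $|\tilde{C}'(i)|\le\poly(\hat{m})\cdot(7/3)^{\hat{m}'}/2^i$ for all $i$, and $\tilde{C}'(0),\dots,\tilde{C}'(\ell')$ is an $[\ell']$-covering code for $U$ by construction. The running time is dominated by greedy: there are $|U|\cdot(\ell'+1) = \poly(\hat{m})\cdot 7^{\hat{m}'}$ candidate sets, each of size $\le|U| = 7^{\hat{m}'}$, and at most $|U| = 7^{\hat{m}'}$ rounds, for a total of $\poly(\hat{m})\cdot 7^{3\hat{m}'} = \poly(\hat{m})\cdot 7^{3\hat{m}/d}$. (Restricting the candidate centers to $U$ rather than to the ambient cube $\{0,1\}^{3\hat{m}'}$ is legitimate, because the code supplied by Lemma~\ref{lemm:main} already has all of its codewords in $U$.)

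\textbf{Step 2 (the product).} I then lift this block-level code to $(\{0,1\}^3\setminus 0^3)^{\hat{m}}$. Partition the $\hat{m}$ clauses of $\varphi'$ into $d$ groups of $\hat{m}'$ clauses each, so that a point $x\in(\{0,1\}^3\setminus 0^3)^{\hat{m}}$ decomposes as $(x_1,\dots,x_d)$ with each $x_j$ in the block space. For each $j$ choose $r_j\le\ell'$ and $z_j\in\tilde{C}'(r_j)$ with $d(x_j,z_j)\le r_j$; then $z=(z_1,\dots,z_d)$ satisfies $d(x,z) = \sum_j d(x_j,z_j)\le\sum_j r_j\le\ell' d = \ell$. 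Accordingly, for $0\le r\le\ell$ define
\[
 C(r)\mathdef\bigcup_{\substack{(i_1,\dots,i_d)\ :\ \sum_{j}i_j = r,\ 0\le i_j\le\ell'}}\tilde{C}'(i_1)\times\cdots\times\tilde{C}'(i_d).
\]
By the previous sentence, $C(0),\dots,C(\ell)$ is an $[\ell]$-covering code for $(\{0,1\}^3\setminus 0^3)^{\hat{m}}$. Since $d$ is constant, the number of index tuples with $\sum_j i_j = r$ is $\poly(\hat{m})$, and each product has size $\prod_j|\tilde{C}'(i_j)|\le\prod_j\poly(\hat{m})(7/3)^{\hat{m}'}/2^{i_j} = \poly(\hat{m})\cdot(7/3)^{\hat{m}}/2^r$, so $|C(r)|\le q_d(\hat{m})\cdot(7/3)^{\hat{m}}/2^r$ for a suitable polynomial $q_d$. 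Explicitly enumerating the codewords of all the $C(r)$ costs $\poly(\hat{m})\cdot\sum_r|C(r)| = \poly(\hat{m})\cdot q_d(\hat{m})(7/3)^{\hat{m}}$, which we absorb into $q_d$; adding the cost of Step~1 gives the total time $\poly(\hat{m})\cdot 7^{3\hat{m}/d} + q_d(\hat{m})\cdot(7/3)^{\hat{m}}$, as claimed.

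The step I expect to require the most care is the accounting in Step~1 that turns the single aggregate guarantee of greedy weighted set cover into the per-radius bounds $|\tilde{C}'(i)|\le\poly(\hat{m})(7/3)^{\hat{m}'}/2^i$ holding for all $i$ simultaneously; this works only because assigning cost $2^i$ to a radius-$i$ ball is calibrated to the $2^{-i}$ scaling of the sizes $s'_i$ inherited from Lemma~\ref{lemm:main}. A secondary point is to check that the greedy factor $1+\ln|U|$ is only polynomial here (it is, as $|U| = 7^{\hat{m}'}$ is a single exponential) and that restricting the candidate centers to $U$ does not increase the optimum.
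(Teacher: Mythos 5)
Your proof is correct, and its overall skeleton (construct a code on a block of $\hat{m}'=\hat{m}/d$ clauses via a set-cover approximation, then take the $d$-fold product and bound $|C(r)|$ by summing over tuples $i_1+\cdots+i_d=r$) is the same as the paper's; the product step and the time accounting match the paper essentially verbatim. Where you genuinely diverge is in how the block-level code is made deterministic. The paper reuses the internal structure of the proof of Lemma~\ref{lemm:main}: it explicitly computes, for every $x$ in the block space, the radius $r_x=\arg\max_i (1/2)^i\Pr\{d(x,y)=i\}$ (by evaluating the \texttt{init-assign} distribution over all $y$ at distance $i$, costing $\poly(\hat{m})\cdot 7^{2\hat{m}'}$), partitions the block space into the classes $A_i=\{x: r_x=i\}$, and runs the \emph{unweighted} greedy set cover separately on each $A_i$ using only radius-$i$ balls, the existence proof guaranteeing that $A_i$ admits a cover by about $s'_i$ such balls. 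You instead run a single \emph{weighted} set cover instance over all radii at once, assigning cost $2^i$ to a radius-$i$ ball, using Lemma~\ref{lemm:main} only as a black-box existence statement to bound the optimum cost by $\sum_i s'_i 2^i=\poly(\hat{m})(7/3)^{\hat{m}'}$, and then reading off the per-radius bounds $|\tilde{C}'(i)|\le\poly(\hat{m})(7/3)^{\hat{m}'}/2^i$ from the cost calibration; your observation that this calibration is exactly what makes the aggregate greedy guarantee yield all per-radius bounds simultaneously is the right crux, and the Chv\'atal $1+\ln|U|$ factor is indeed only $O(\hat{m})$ here. Your route buys a cleaner argument: it needs neither the computation of $r_x$ nor the partition $[A_0,\dots,A_{\ell'}]$, so it does not depend on how Lemma~\ref{lemm:main} was proved. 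The paper's route buys a more literal reuse of the Dantsin et al.\ machinery (only unweighted set cover is invoked). Both yield the same size bounds and the same $\poly(\hat{m})\cdot 7^{3\hat{m}/d}+q_d(\hat{m})\cdot(7/3)^{\hat{m}}$ running time, and your remarks that the candidate centers may be restricted to $(\{0,1\}^3\setminus 0^3)^{\hat{m}'}$ (since the existential code already lives there) and that the weights, though exponentially large, have polynomial bit length, close the only places where the weighted variant could have gone wrong.
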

\begin{proof}
First,
we deterministically construct
an $[\ell']$-covering code $D'(0),D'(1),$ $\dots,D'(\ell')$
for $(\{0,1\}^3\setminus 0^3)^{\hat{m}'}$
such that $|D'(i)|\leq\poly(\hat{m}')\cdot s'_i$.
(Then, we concatenate all of them.
See below for details.)
Recall the proof of the previous lemma:
Let $p_i=(1/2)^i\Pr\{d(x,y)=i\}$ for each $i:0\leq i\leq\ell'$.
For any $x\in(\{0,1\}^3\setminus 0^3)^{\hat{m}'}$,
we have defined $r_x=\arg\max\{p_i:0\leq i\leq\ell'\}$,
which depends only on $x$.
Then,
we have concluded that
the sequence $C'(0),C'(1),\dots,C'(\ell')$ of random codes
satisfies the following with high probability:
every $x\in(\{0,1\}^3\setminus 0^3)^{\hat{m}'}$
is covered with the random code $C'(r_x)$.

Fix such an $[\ell']$-covering code $C'(0),C'(1),\dots,C'(\ell')$
for $(\{0,1\}^3\setminus 0^3)^{\hat{m}'}$.
Note here that $|C'(i)|=s'_i$ for $0\leq i\leq \ell'$.
For each $i:0\leq i\leq \ell'$,
let
\[
 A_i \mathdef \left\{x\in(\{0,1\}^3\setminus 0^3)^{\hat{m}'}:r_x=i\right\}.
\]
Note that
$[A_0,A_1,\dots,A_{\ell'}]$
is a partition of $(\{0,1\}^3\setminus 0^3)^{\hat{m}'}$.
Below,
we regard that
$C'(i)$ dedicates to covering (only) $A_i$
(although some codeword of $C'(i)$ may cover some elements outside $A_i$).

The point of the proof is that
we apply the approximation algorithm for the set cover problem to $A_i$
(not to the whole space $(\allbutfalse)^{\hat{m}'}$),
from which we (deterministically) obtain a covering code for $A_i$.
For this,
we obtain all elements of $A_i$ and keep them.
This is done by
calculating the value of $r_x$ for each $x\in(\allbutfalse)^{\hat{m}'}$,
Furthermore,
the calculation of $r_x$ is done
by calculating $p_j$ for every $j:0\leq j\leq\ell'$:
enumerate all $y\in(\{0,1\}^3\setminus 0^3)^{\hat{m}'}$ such that $d(x,y)=j$,
and then calculate the probability that $y$ is generated
by the function $\texttt{init-assign}$.
Then,
summing up those values of the probability,
we can calculate $\Pr\{d(x,y)=j\}$, and hence $p_j$.
Choosing $j$ as $r_x$
such that $p_j$ is the maximum of all $j:0\leq j\leq \ell'$,
we can obtain the value of $r_x$, and hence $A_i$.
In total,
it takes $\poly(\hat{m})\cdot 7^{2\hat{m}'}$ time for that procedure.

Now,
we apply the approximation algorithm for the set cover problem to each $A_i$.
As is similar to \cite{DGH02},
the approximation algorithm finds
a covering code $D'(i)$ for $A_i$ such that $|D'(i)|\leq q(\hat{m}')\cdot s'_i$
in time $q(\hat{m}')\cdot|A_i|^3$
for some polynomial $q(\hat{m}')$.
In total,
since $|A_i|\leq 7^{\hat{m}'}$,
it takes $q(\hat{m}')\cdot 7^{3\hat{m}'}$ time for that procedure.

So far,
we have obtained an $[\ell']$-covering code $D'(0),D'(1),\dots,D'(\ell')$
for $(\{0,1\}^3\setminus 0^3)^{\hat{m}'}$
such that $|D'(i)|\leq\poly(\hat{m}')\cdot s'_i$.
For each $0\leq i\leq \ell=\ell' d$,
let
\[
 C(i) \mathdef
 \{D'(i_1)\times D'(i_2)\times\cdots\times D'(i_d):
 i=i_1+i_2+\cdots+i_d,~0\leq i_j\leq \ell'\}.
\]
It is easy to see that
$C(0),C(1),\dots,C(\ell)$ is an $[\ell]$-covering code
for $(\allbutfalse)^{\hat{m}}$.
We (naively) estimate the upper bound on $|C(i)|$.
Let $i_1,i_2,\cdots,i_d$ be integers such that
$i=i_1+i_2+\cdots+i_d$ and $0\leq i_j\leq \ell'$.
Then,
\begin{eqnarray*}
 & & |D'(i_1)\times D'(i_2)\times\cdots\times D'(i_d)| \\
 &=& (\poly(\hat{m}'))^d\cdot
 \frac{8\hat{m}'^2(7/3)^{\hat{m}'}}{2^{i_1}}\cdot
 \frac{8\hat{m}'^2(7/3)^{\hat{m}'}}{2^{i_2}}\cdot\cdots\cdot
 \frac{8\hat{m}'^2(7/3)^{\hat{m}'}}{2^{i_d}} \\
 &=&
 (\poly(\hat{m}'))^d\cdot
 \frac{(8\hat{m}'^2)^d(7/3)^{\hat{m}'d}}{2^{i_1+\cdots+i_d}} \\
 &=&
 (\poly(\hat{m}'))^d\cdot\frac{(7/3)^{\hat{m}}}{2^i}.
\end{eqnarray*}
Since the number of combinations $i_1,\dots,i_d$
such that $i=i_1+\cdots+i_d$ and $0\leq i_j\leq \ell'$
is at most $(\ell'+1)^d$,
we have
\[
 |C(i)| \leq
 (\ell'+1)^d\cdot\poly(\hat{m}')\cdot\frac{(7/3)^{\hat{m}}}{2^i}
 \leq q_d(\hat{m})
 \cdot\frac{(7/3)^{\hat{m}}}{2^i}
\]
for some polynomial $q_d(\hat{m})$.

Finally,
we check the running time needed to construct $C(i)$.
It takes $q(\hat{m})\cdot 7^{3\hat{m}/d}$ time
to construct the $[\ell']$-covering code $D'(0),D'(1),\dots,D'(\ell')$
for $(\{0,1\}^3\setminus 0^3)^{\hat{m}'}$.
Furthermore,
it takes $\sum_{i=0}^{\ell}|C(i)|$
to construct the $[\ell]$-covering code $C(0),$ $C(1),\dots,C(\ell)$
for $(\{0,1\}^3\setminus 0^3)^{\hat{m}}$,
which is at most $q_d(\hat{m})\cdot (7/3)^{\hat{m}}$.
Summing up,
it takes
$q(\hat{m})\cdot 7^{3\hat{m}/d}+q_d(\hat{m})\cdot (7/3)^{\hat{m}}$ in total.
\qed
\end{proof}

Recall that $|\varphi'|=\hat{m}=\Omega(n)$.
Let $n'=n-3\hat{m}$,
which is the number of variables in $\varphi$ not appeared in $\varphi'$.
For the space $\{0,1\}^{n'}$,
we use an ordinary covering code,
that is guaranteed by Lemma \ref{lemm:dantsin}
to be deterministically constructed.

\begin{corollary}\label{coro:main}
Let $d$ be a sufficiently large positive constant,
and let $0<\rho<1/2$.
Then,
there is a polynomial $q_d(n)$ that satisfies the following:
an $\{i+\rho n':0\leq i\leq \ell\}$-covering code
$C(0+\rho n'),C(1+\rho n'),C(2+\rho n'),\dots,C(\ell+\rho n')$
for $(\allbutfalse)^{\hat{m}}\times\{0,1\}^{n'}$
such that
$|C(i)|\leq q_d(n)(7/3)^{\hat{m}}2^{(1-h(\rho))n'}/2^{i}$,
can be deterministically constructed
in time
$q_d(n)(7/3)^{\hat{m}}2^{(1-h(\rho))n'}$.
\end{corollary}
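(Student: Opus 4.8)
The plan is to obtain the required code as the \emph{product} of two codes that are already at our disposal: the $[\ell]$-covering code for $(\allbutfalse)^{\hat m}$ produced by Lemma~\ref{lemm:construction}, and an ordinary $\rho n'$-covering code for $\{0,1\}^{n'}$ produced by Lemma~\ref{lemm:dantsin}. Throughout I would tacitly enlarge $d$ by a constant so that $d\mid\hat m$ and $d\mid n'$ (alternatively, pad $\varphi'$ with dummy independent clauses and $X$ with dummy variables); this affects the running time and the sizes only by a polynomial factor, and $\ell$ is taken to be the value supplied by Lemma~\ref{lemm:construction}.

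First I would invoke Lemma~\ref{lemm:construction} to deterministically construct an $[\ell]$-covering code $C_1(0),\dots,C_1(\ell)$ for $(\allbutfalse)^{\hat m}$ with $|C_1(i)|\le p(\hat m)\,(7/3)^{\hat m}/2^{i}$ for all $0\le i\le\ell$, in time $\poly(\hat m)\,7^{3\hat m/d}+p(\hat m)(7/3)^{\hat m}$, where $p$ is the polynomial furnished by that lemma. Second I would invoke Lemma~\ref{lemm:dantsin} with length $n'$, divisor $d$, and parameter $\rho$ to deterministically construct an ordinary covering code $C_2$ for $\{0,1\}^{n'}$ of radius some integer $r'\le\rho n'$ and size $|C_2|\le p'(n')\,2^{(1-h(\rho))n'}$, in time $p'(n')(2^{3n'/d}+2^{(1-h(\rho))n'})$. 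Then, for each $0\le i\le\ell$, I would set
\[
 C(i+\rho n')\mathdef\{(y_1,y_2):y_1\in C_1(i),\ y_2\in C_2\}\subseteq\{0,1\}^{3\hat m+n'},
\]
regarded as a code whose codewords index balls of radius $i+r'\le i+\rho n'$.

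To see that $C(0+\rho n'),\dots,C(\ell+\rho n')$ is an $\{i+\rho n':0\le i\le\ell\}$-covering code for $(\allbutfalse)^{\hat m}\times\{0,1\}^{n'}$, fix any $x=(x_1,x_2)$ in this space. Since $C_1(0),\dots,C_1(\ell)$ is an $[\ell]$-covering code there are $r\le\ell$ and $y_1\in C_1(r)$ with $d(x_1,y_1)\le r$; since $C_2$ is an $r'$-covering code there is $y_2\in C_2$ with $d(x_2,y_2)\le r'$. As the two blocks of coordinates are disjoint, $d\big((x_1,x_2),(y_1,y_2)\big)=d(x_1,y_1)+d(x_2,y_2)\le r+r'$, and $(y_1,y_2)\in C(r+\rho n')$, so $x$ is covered. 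For the sizes, $|C(i+\rho n')|=|C_1(i)|\cdot|C_2|\le p(\hat m)p'(n')\,(7/3)^{\hat m}2^{(1-h(\rho))n'}/2^{i}$, which is of the claimed form with $q_d(n):=p(\hat m)p'(n')$, a polynomial in $n$ since $\hat m,n'\le n$.

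The only step that needs real care is the running-time bookkeeping. Constructing $C_1$ and $C_2$ costs the two bounds quoted above, and forming all the products costs $O\big(\sum_{i=0}^{\ell}|C(i+\rho n')|\big)=O\big(q_d(n)(7/3)^{\hat m}2^{(1-h(\rho))n'}\big)$ since $\sum_{i\ge0}2^{-i}\le 2$. It then remains to observe that, because $d$ is a sufficiently large constant, $7^{3\hat m/d}\le(7/3)^{\hat m}$ (true once $d\ge 3\ln 7/\ln(7/3)$) and $2^{3n'/d}\le 2^{(1-h(\rho))n'}$ (true once $d\ge 3/(1-h(\rho))$, a finite bound because $\rho<1/2$ forces $h(\rho)<1$), so that each of the three contributions is at most $q_d(n)(7/3)^{\hat m}2^{(1-h(\rho))n'}$ after absorbing polynomial factors into $q_d$. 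I expect this bookkeeping, together with the harmless divisibility assumptions, to be the only obstacle; the covering property and the size count fall out immediately from the product construction.
\qed
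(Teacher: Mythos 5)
Your proposal is correct and follows essentially the same route as the paper: build $C(i+\rho n')$ as the product $C_1(i)\times C_2(\rho n')$ of the code from Lemma~\ref{lemm:construction} and the ordinary covering code from Lemma~\ref{lemm:dantsin}, check the covering property and sizes coordinate-blockwise, and use the largeness of $d$ to absorb the $7^{3\hat m/d}$ and $2^{3n'/d}$ construction costs into the stated time bound. Your explicit handling of the divisibility/padding issue is a minor extra care the paper leaves implicit, but it does not change the argument.
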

\begin{proof}
It is derived from the previous lemma
and Lemma \ref{lemm:dantsin}.
Given an $[\ell]$-covering code $C_1(0),C_1(1),\dots,C_1(\ell)$
for $(\allbutfalse)^{\hat{m}}$,
and a $\rho n'$-covering code $C_2(\rho n')$ for $\{0,1\}^{n'}$.
For each $0\leq i\leq \ell$,
let
\[
 C(i+\rho n') \mathdef C_1(i)\times C_2(\rho n').
\]
It is easy to see that
$C(0+\rho n'),C(1+\rho n'),C(2+\rho n'),\dots,C(\ell+\rho n')$
is an $\{i+\rho n':0\leq i\leq \ell\}$-covering code
for the space $(\allbutfalse)^{\hat{m}}\times\{0,1\}^{n'}$.
Furthermore,
$|C(i+\rho n')|\leq q_d(n)(7/3)^{\hat{m}}2^{(1-h(\rho))n'}/2^{i}$
for each $i:0\leq i\leq\ell$.
From the previous lemma,
if the constant $d$ is sufficiently large,
the running time for (deterministically)
constructing $C_1(0),C_1(1),\dots,C_1(\ell)$
is at most $q_d(\hat{m})(7/3)^{\hat{m}}$.
Similarly,
from Lemma \ref{lemm:dantsin},
the running time for (deterministically) constructing $C_2(\rho n')$
is at most $q_d(n')2^{(1-h(\rho))n'}$.
Thus,
the total running time is at most
\begin{eqnarray*}
 & & q_d(\hat{m})(7/3)^{\hat{m}} + q_d(n')2^{(1-h(\rho))n'}
 + \sum_{i=0}^{\ell}q_d(n)(7/3)^{\hat{m}}2^{(1-h(\rho))n'}/2^i \\
 &\leq&
 q_d(n)(7/3)^{\hat{m}}2^{(1-h(\rho))n'}
\end{eqnarray*}
for some polynomial $q_d(n)$.
\qed
\end{proof}

Now,
using this corollary,
we show a derandomization of HSSW,
and hence we prove Theorem \ref{theo:main}.
The outline of the deterministic algorithm is almost same as HSSW,
which is described in Fig. \ref{fig:hssw}.
We show the derandomization for the case of $|\varphi'|>\alpha n$.
Given $\varphi'$,
we deterministically construct
an $\{i+\rho n':0\leq i\leq \ell\}$-covering code
$C(0+\rho n'),C(1+\rho n'),C(2+\rho n'),\dots,C(\ell+\rho n')$,
as is specified in the proofs of
Lemma \ref{lemm:dantsin},
Lemma \ref{lemm:construction},
and Corollary \ref{coro:main}.
For any $z\in\{0,1\}^n$ and non-negative integer $i$,
we denote by $B(z,i)$ the set of $w\in\{0,1\}^n$ such that $d(z,w)\leq i$.
Then,
given such an $\{i+\rho n':0\leq i\leq \ell\}$-covering code,
we check whether there is a satisfying assignment
within $B(z,i+\rho n')$
for each $0\leq i\leq \ell$ and each $z\in C(i+\rho n')$.
It is easy to see that
this algorithm finds a satisfying assignment of $\varphi$
if and only if $\varphi$ is satisfiable.

We estimate the running time of the algorithm.
For any fixed $i$ and $z$,
the search of a satisfying assignment within $B(z,i+\rho n')$
is done in time $(2+\epsilon)^{i+\rho n'}$ for any small constant $\epsilon>0$,
which is guaranteed by Theorem \ref{theo:moser}.
Thus,
given an $\{i+\rho n':0\leq i\leq \ell\}$-covering code,
the running time for this task for all $B(z,i+\rho n')$ is at most
\begin{eqnarray*}
 & & q_d(n)\cdot\sum_{0\leq i\leq \ell}
 \left(\frac{(7/3)^{\hat{m}}}{2^i}\cdot 2^{(1-h(\rho))n'}\right)
 \cdot 2^{i+\rho n'}\cdot (1+\epsilon)^n \\
 &=&
 q_d(n)\cdot\left(\frac{7}{3}\right)^{\hat{m}}\cdot
 \left(2^{(1-h(\rho))n'}\cdot 2^{\rho n'}\right)\cdot (1+\epsilon)^n \\
 &=&
 q_d(n)\cdot\left(\frac{7}{3}\right)^{\hat{m}}\cdot
 \left(\frac{4}{3}\right)^{n'}\cdot (1+\epsilon)^n \\
 &=&
 q_d(n)\cdot
 \left(\frac{4}{3}\right)^{n}\cdot \left(\frac{63}{64}\right)^{\hat{m}}
 \cdot (1+\epsilon)^n,
 \quad (\because~ n'=n-3\hat{m})
\end{eqnarray*}
for some polynomial $q_d(n)$.
Note from the above corollary that
the running time for constructing $\{i+\rho n':0\leq i\leq \ell\}$-covering code
is less than the above value.
Thus,
the total running time in case of $|\varphi'|>\alpha n$
is at most $\exprun((4/3)^{n}(63/64)^{\hat{m}}(1+\epsilon)^n)$
% yamamoto-110204 the above exponent n of 4/3 was n'.
for any $\epsilon>0$.
(Compare this value with the success probability of (\ref{eq:hssw-success}).)
On the other hand,
it is easy to see that
the running time in case of $|\varphi'|\leq\alpha n$
is at most $\exprun(7^{\hat{m}})$.
Therefore,
by setting $\alpha$
so that $(4/3)^{n}(63/64)^{\alpha n}(1+\epsilon)^n=7^{\alpha n}$ holds
% yamamoto-110204 the above exponent n of 4/3 was n'.
(with $\epsilon>0$ arbitrarily small),
we obtain the running time $\exprun(1.3303^n)$.

\section{Conclusion}

We have shown a full derandomization of HSSW, and thereby present a
currently fastest deterministic algorithm for 3-SAT.
An obvious future work is to obtain a full derandomization of the
currently best known randomized algorithm for 3-SAT~\cite{HMS11}.
To do so, it seems to be required to derandomize Paturi~et~al.'s
algorithm~\cite{PPS98} completely.
Another possible future work is to extend HSSW algorithm to the
$k$-SAT case.
It leads to the fastest deterministic algorithms for $k$-SAT, combined
with the derandomization techniques of this paper and Moser and
Scheder~\cite{MS10}.

\if 0
One obvious future work is to extend it to $k$-SAT.
Since,
if we are given Lemma \ref{lemm:hssw} in the setting $k$-CNF,
it is almost straightforward
to extend the deterministic construction of an $[\ell]$-covering code
to one in the setting of $k$-CNF,
that is, for the space $(\{0,1\}^k\setminus 0^k)^{\hat{m}}$,
the main task is to show Lemma \ref{lemm:hssw} for the general case.
\fi

\end{document}